\providecommand{\tabularnewline}{\\}
\newcommand{\1}{{\rm 1\hspace{-0.9mm}l}}
\newcommand{\R}{\ensuremath{\mathbb{R}}}
\newtheorem{prop}{Proposition}
\newtheorem{theorem}{Theorem}
\newtheorem{corollary}{Corollary}
\begin{document}

\title{Gauge invariant information 
        concerning quantum channels}

\author{ \L ukasz Rudnicki} 
\email{rudnicki@cft.edu.pl}
\affiliation{Institute for Theoretical Physics, University of Cologne, 
Z{\"u}lpicher Stra{\ss}e 77, D-50937, Cologne, Germany}
\affiliation{Center for Theoretical Physics, Polish Academy of Sciences, 
Al. Lotnik{\'o}w 32/46, 02-668 Warsaw, Poland}

\author{Zbigniew Pucha{\l}a}
\affiliation{Institute of Theoretical and Applied Informatics, Polish Academy
of Sciences, ulica Ba{\l}tycka 5, 44-100 Gliwice, Poland}
\affiliation{Faculty of Physics, Astronomy and Applied Computer Science, 
Jagiellonian University, ul. {\L}ojasiewicza 11,  30-348 Krak{\'o}w, Poland}

\author{Karol {\.Z}yczkowski}
\affiliation{Center for Theoretical Physics, Polish Academy of Sciences, 
Al. Lotnik{\'o}w 32/46, 02-668 Warsaw, Poland}
\affiliation{Faculty of Physics, Astronomy and Applied Computer Science, 
Jagiellonian University, ul. {\L}ojasiewicza 11,  30-348 Krak{\'o}w, Poland}

\begin{abstract}
Motivated by the gate set tomography 
we study quantum channels
from the perspective of information which is invariant with respect to   
the gauge realized through similarity of matrices representing channel superoperators. 
We thus use the complex spectrum of the superoperator 
to provide necessary conditions relevant for
complete positivity of qubit channels and to express various metrics
such as average gate fidelity.
\end{abstract}
\maketitle
\section{Introduction} Gauge symmetries are among the most influential
concepts of modern physics. They provide suitable underlying structures
for all known field theories such as Electrodynamics, General Relativity
or Yang-Mills theory. Most importantly, while working with a theory
equipped with a gauge, one becomes guided to meaningful quantities,
namely those which are gauge invariant (do not depend on the gauge
degrees of freedom).

Let us change a perspective for a moment and focus on the following
problem. An unknown square matrix $\Phi$ is provided to us through
a similar matrix $M$ 
\begin{equation}
\Phi=XMX^{-1}\qquad\textrm{or}\qquad M=X^{-1}\Phi X.
\label{gauge}
\end{equation}
Let $X$ be yet another, more or less arbitrary, invertible matrix
of the same size as $\Phi$ and $M$. If $X$ is further assumed to
belong to a certain group, one can be tempted to call (\ref{gauge})
the gauge transformation. It is important to stress that this conceptual
association is right only if we cannot get
the full information concerning $X$, 
so that both $\Phi$ and $M$ are truly equivalent. 

Such a situation is inherent to\emph{ gate set tomography} (GST) \cite{Robin1,Robin2},
a modern view \cite{Greenbaum} on quantum process tomography \cite{QPT}, in which gauge-related terminology is common in use.
In this scheme, the matrix $\Phi$ is a superoperator which represents
a true quantum channel, while $M$ is its reconstruction based on
experimental data. The matrix $X$ is a symmetry, which leaves invariant all
measured probabilities. 

In fact, in GST one simultaneously deals with a, so called, gate set
$\mathcal{G}$ consisting of several gates (quantum channels) $\Phi_{k}$,
an initial state $\rho$ and a final generalized
measurement with two outcomes, 
$\left\{ E,\1-E\right\} $
\textemdash{} all being unknown. During the estimation procedure one
must reconstruct all the gates in question, as well as the initial
state and the applied measurement. The crucial conceptual difference,
with respect to standard quantum (process) tomography, is that linearly
independent sets of initial states and final measurements are obtained
through sequences of gates from $\mathcal{G}$ applied to the single
initial state and the single final measurement. Note that the above idea
was also considered directly in full process-- \cite{Merkel} and state--tomography \cite{Gross}; in the latter case in a somewhat implicit
way. The discussed novelty, while being a source of several advantages
of GST over older approaches, is also responsible for the ``gauge''
ambiguity, since all the probabilities measured during the associated
experiments are of the form:
\begin{equation}
\left\langle \left\langle E\right|\right.\!\Phi_{k_{1}}\Phi_{k_{2}}\cdots\Phi_{k_{n}}\left.\!\left|\rho\right\rangle \right\rangle ,\label{probabilities}
\end{equation}
where $\left.\!\left|\cdot\right\rangle \right\rangle $ and $\left\langle \left\langle \cdot\right|\right.\!$
denote (co)vector representations of involved matrices, to be multiplied
by the superoperators. Clearly \cite{Robin1,Greenbaum}, the probabilities
(\ref{probabilities}) remain the same if instead of $\Phi_{k}$,
$\left.\!\left|\rho\right\rangle \right\rangle $ and $\left\langle \left\langle E\right|\right.\!$
one provides $M_{k}=X^{-1}\Phi_{k}X$, $X^{-1}\left.\!\left|\rho\right\rangle \right\rangle $
and $\left\langle \left\langle E\right|\right.\!X$. 

It seems, however, that so far the ``gauge degree of freedom'' of
the GST,  even though words like  ``transformation'' or  ``invariance'' are among its semantic collocates \cite{Robin3}, has not been treated as a serious player. While working with
Electrodynamics, for instance, one can either consider a vector potential
in a \emph{fixed} gauge (less preferable option) or resort to gauge
invariant quantities such as the electromagnetic field. In GST, the
first path is usually followed, so that $X$ becomes fixed by means
of an optimization procedure. The figure of merit to be optimized
might in principle be a matter of choice or convenience, however,
it usually is expected to minimize the discrepancy between $\mathcal{G}$
and a predefined target gate set $\mathcal{G}'$ \cite{Robin3}. On
the one hand, taking this approach is not surprising since the GST
gauge, in its formal aspects, is only a shadow of its field-theoretic
counterpart. There is no room for structures such as an underlying
manifold (spacetime), gauge connection, curvature, Killing vectors,
etc. On the other hand, by fixing the gauge in a way which makes $\mathcal{G}$
possibly close to $\mathcal{G}'$, one likely underestimates the so-called
QCVV (quantum characterization, verification, and validation) metrics,
such as average gate fidelity \cite{NielsenChuang} or the diamond
norm \cite{diamond}, relevant for fault-tolerant quantum computing. 

A broader context of QCVV enlightens yet another issue. A genuine
gauge is not only a degree of freedom one cannot control (this is
fulfilled in GST) but also one which, being inherent to a physical
phenomenon, cannot be avoided. The latter property is not clearly
satisfied by the GST gauge. Even though one argues that the level
of quantum control, noise reduction etc. necessary to make quantum
process tomography faithful (can one prepare an orthogonal set of
input states in a fully controlled way?) is by far beyond the current
technology \cite{Robin1}, there is no fundamental obstacle in achieving
the desired level in the future.\textbf{ }Moreover, there are other,
complementary methods such as randomized benchmarking 
\cite{EAZ05, RB1, RB2, RBwithConfidence},
which allow for quality assessment of the implemented gates. Thus,
even though there is no room for a proper, rich GST gauge theory,
the main aspect brought by the gauge  --- 
the necessity to
rely only on gauge invariant quantities --- 
seems rather indispensable. 

The aim of this paper is twofold. First of all, we make first attempts
to discuss about properties of quantum channels by means of gauge-invariant
information, showing natural limitations appearing in that strategy.
In more detail, we provide in Section \ref{CPsec} gauge invariant,
necessary criteria for complete positivity of qubit ($d=2$) quantum
channels and discuss in Section \ref{QCVVsec} gauge invariant expressions
and bounds (in arbitrary dimension $d$) for few important QCVV metrics.
Second of all, we would like to point out that unexpectedly (at least
for the authors) there are still interesting and non-trivial problems
associated with the sole description of single qubit channels.

\section{Preliminaries}

From now on, since the gauge transformation acts in the same way on
all gates from $\mathcal{G}$, we restrict our considerations to a
single quantum channel $\mathcal{E}$ acting on $d\times d$ density
matrices. The action of the channel, defined in terms of 
$L\le d^2$ Kraus operators $K_n$,
\begin{equation}
\mathcal{E}\left(\rho\right)=\sum_{n=1}^L K_{n}\rho K_{n}^{\dagger},
\label{Kraus1}
\end{equation}
leads to a  representation of the superoperator
by a square matrix of size $d^2$,
\begin{equation}
\Phi=\sum_{n=1}^L K_{n}\otimes {\bar K}_{n},
\end{equation}
where $\bar K_n$ denotes the complex conjugate of $K_n$.
Given an operator basis $\left\{ B_{0},\ldots,B_{d^{2}-1}\right\} $,
such that $B_{0}=\1/\sqrt{d}$ and $\textrm{Tr}\left(B_{i}^{\dagger}B_{j}\right)=\delta_{ij}$,
the superoperator further acquires the block form 
\begin{equation}
\Phi=\left(\begin{array}{cc}
1 & 0\\
\boldsymbol{k} & T
\end{array}\right),\label{Block}
\end{equation}
provided that we assume the channel $\mathcal{E}$ to be trace preserving.
Up to $d^{2}-1$ dimensional rotations, such basis is formed by the
generalized Pauli matrices. The real matrix $T$ is supposed to act
on generalized Bloch vectors of a quantum state, while the real vector
$\boldsymbol{k}$ accounts for the translation of 
the center of of the set of quantum states and vanishes for unital maps.

Real matrices with the block structure (\ref{Block})  and non-singular $T$ form a group
$G$, being a subgroup of $GL_{d^{2}}\left(\mathbb{R}\right)$, such
that $A\in G$ if $A\in GL_{d^{2}}\left(\mathbb{R}\right)$ and $A_{0j}=\delta_{0j}$
(trace preservation).  If $T$ is singular, trace preservation defines a sub-semigroup of 
 $d^2$-dimensional matrices. A natural requirement, saying that the reconstructed
matrix $M$ from (\ref{gauge}) respects trace preservation enforces gauge matrices to belong to $G$. One can verify this observation by testing the relevant condition
\begin{equation}
\sum_{k,l}X_{0l}A_{lk}\left(X^{-1}\right)_{km}=\delta_{0m},
 \ \ \ m=0,\dots, d^2-1,
\end{equation}
for any  $A$ in the block form (\ref{Block}). After multiplying the above vector by $X$ one
finds
\begin{equation}
X_{00}\delta_{0m}+\sum_{l}X_{0l}A_{lm}=X_{0m}.
\end{equation}
The unique solution of this equation (assuming that $X$ does not
depend on $A$) is $X_{0m}=X_{00}\delta_{0m}$. 
We can set $X_{00}=1$
since for any number $\chi$ the matrices $X$ and $\chi X$ provide
the same gauge transformation.

As a result of the above analysis, all gauge invariant information
about the superoperator is stored in its eigenvalues, since 
\begin{equation}
\lambda_{m}\left(\Phi\right)=\lambda_{m}\left(M\right),
\label{eigenvalues}
\end{equation}
for $m=0,\ldots,d^{2}-1$. By $\lambda_{m}\left(\cdot\right)$ we
denote eigenvalues of a matrix and stick to the convention that $\lambda_{0}\left(\cdot\right)=1$,
which is true for the block form (\ref{Block}). More formally,  considering non-singular $T$,  we
say that only class functions (those which are constant on conjugacy
classes for $G$, defined by the gauge) provide the way towards gauge
invariant description of quantum gates.

Looking at the discussed problem from the perspective of theoretical
quantum mechanics, we can see that the general aim is reduced to studies
of spectrum of $\Phi$. Quite surprisingly, the question posed in
that purely formal way, namely how much do we know about the spectrum
of the superoperator corresponding to a CP TP map,
 has not been answered in a comprehensive manner.
We know that \cite{OSID2004}:
\begin{enumerate}
\item $\lambda_{0}\left(\Phi\right)=1$, due to trace preservation,
\item other $\lambda_{m}\left(\Phi\right)=1$, $m=1,\ldots,d^{2}-1$ are
in general complex, though in practice they are either real or come
in conjugate pairs (so that $\det\Phi$ and $\textrm{Tr}\Phi$ are
real), 
\item All eigenvalues belong to the unit disk, $\left|\lambda_{m}\left(\Phi\right)\right|\leq1$.
\end{enumerate}
These general statements on the spectrum of stochastic quantum
maps  can be considered as an analogue of the Frobenius--Perron theorem \cite{BCSZ09}, which concerns  spectra of stochastic matrices.

Mathematical literature contains also more 
precise and refined  results concerning 
the spectrum of quantum maps.
A quantum operation $\Phi$ is called {\sl irreducible} if there is no
non-trivial face of the cone of quantum states invariant under 
$\Phi$. In such a case the leading eigenvalue $\lambda_0=1$ 
is non-degenerated \cite{EHK78,Ja12}.
Working with the Kraus representation (\ref{Kraus1})
it is possible to show \cite{Fa96}
that the map is irreducible if and only if the set of Kraus operators 
$\{K_1, \dots K_L\}$ has no non-trivial common invariant subspace.

The structure of the {\sl peripheral spectrum}, which contains all eigenvalues of $\Phi$ with modulus one, was studied by Groh in 
the eighties \cite{Gr81} and also more recently
in context of quantum information \cite{WolfCirac,WPG10, Ra17}.
A map is called {\sl primitive} if the peripheral spectrum
has no eigenvalues different than one.
Operational criteria allowing one to establish whether a given map is  primitive were established in \cite{SPGWC10}.
For a primitive map with non-degenerated leading eigenvalue $\lambda_0=1$,
the subleading eigenvalue $\lambda_1$
has modulus  smaller than one, so the {\sl spectral gap},
$\gamma:=1-|\lambda_1|$ is positive.
The size of the spectral gap determines the quantum dynamics,
as it yields an estimate for the rate with which 
any initial state converges to the single invariant state, 
$\rho_0$ associated with $\lambda_0$.
In the case of a generic quantum
map acting on a $d$--dimensional system
the gap is typically large,
 as the modulus of the subleading eigenvalue 
$|\lambda_1|$ scales  \cite{BCSZ09,BSCSZ10} as $1/d$.


\section{Gauge invariant information about complete positivity of a quantum channel}\label{CPsec} 

We now focus on our first question, namely how much information about
complete positivity of a channel can be learned from the sole collection
of eigenvalues of the superoperator. We shall consider the simplest,
but fairly non-trivial case $d=2$, and  use the singular-value decomposition to represent $T=O_{1}\eta O_{2}^{T}$
with $\eta=\textrm{diag}\left(\eta_{1},\eta_{2},\eta_{3}\right)$,
being a diagonal matrix \cite{Geom}. Since $O_{1},O_{2}\in SO(3)$,
the real parameters $\eta_{l}$ can be negative. The singular values
of $T$ are thus given by $s_{k}=\left|\eta_{k}\right|$. 

The only sensible way of comparison between arbitrary collections of singular values ($s_{i}$) and eigenvalues ($\lambda_{i}$) is by means of
majorization relations (weak and log majorization respectively):\begin{subequations}\label{maj}
\begin{equation}
\sum_{i=1}^{k}\left|\lambda_{i}\right|^{\downarrow}\leq\sum_{i=1}^{k}s_{i}^{\downarrow},\qquad k=1,\ldots,N,\label{maj1}
\end{equation}
\begin{equation}
\prod_{i=1}^{k}\left|\lambda_{i}\right|^{\downarrow}\leq\prod_{i=1}^{k}s_{i}^{\downarrow},\qquad k=1,\ldots,N-1,\label{maj2}
\end{equation}
\begin{equation}
\prod_{i=1}^{N}\left|\lambda_{i}\right|^{\downarrow}=\prod_{i=1}^{N}s_{i}^{\downarrow},\label{maj3}
\end{equation}
\end{subequations}applicable to every $N\times N$ matrix. Traditionally,
by ``$\downarrow$'' we denote the decreasing order, so that $s_{k}^{\downarrow}\geq s_{l}^{\downarrow}$
for all $k\leq l$, and the same ordering is applied to the moduli
of eigenvalues.

\subsection{Unital quantum channels} 
We will first discuss the case of a unital channel, with vanishing translation vector $\boldsymbol{k}=0$,
which is known to be completely positive if and only if the Fujiwara-Algoet conditions (FA) \cite{FuAl}
\begin{equation}
1\pm\eta_{3}\geq\left|\eta_{1}\pm\eta_{2}\right|,\label{FA}
\end{equation}
are satisfied. There is no genuine relation between the numbers $\eta_{l}$
(here and later $l=1,2,3$) and the eigenvalues $\lambda_{l}\left(\Phi\right)\equiv\lambda_{l}\left(T\right)$,
except from the determinant condition
\begin{equation}
\det T=\eta_{1}\eta_{2}\eta_{3}=\lambda_{1}\lambda_{2}\lambda_{3}.
\end{equation}
The determinant of $T$ will thus be of special interest for us later
on. To simplify the notation, we omit inside the current section (whenever there is no confusion)
 the matrix argument of involved eigenvalues. 

\begin{table}
\begin{tabular}{|c|c|c|c|c|}
\hline 
$\eta_{1}$ & $\eta_{2}$ & $\eta_{3}$ & $\textrm{sign}\left(\det T\right)$ & $1\pm\eta_{3}\geq\left|\eta_{1}\pm\eta_{2}\right|$\tabularnewline
\hline 
\hline 
$+s_{1}$ & $+s_{2}$ & $+s_{3}$ & $1$ & $1\pm s_{3}\geq\left|s_{1}\pm s_{2}\right|$\tabularnewline
\hline 
$+s_{1}$ & $+s_{2}$ & $-s_{3}$ & $-1$ & $1\mp s_{3}\geq\left|s_{1}\pm s_{2}\right|$\tabularnewline
\hline 
$+s_{1}$ & $-s_{2}$ & $+s_{3}$ & $-1$ & $1\pm s_{3}\geq\left|s_{1}\mp s_{2}\right|$\tabularnewline
\hline 
$+s_{1}$ & $-s_{2}$ & $-s_{3}$ & $1$ & $1\mp s_{3}\geq\left|s_{1}\mp s_{2}\right|$\tabularnewline
\hline 
$-s_{1}$ & $+s_{2}$ & $+s_{3}$ & $-1$ & $1\pm s_{3}\geq\left|s_{1}\mp s_{2}\right|$\tabularnewline
\hline 
$-s_{1}$ & $+s_{2}$ & $-s_{3}$ & $1$ & $1\mp s_{3}\geq\left|s_{1}\mp s_{2}\right|$\tabularnewline
\hline 
$-s_{1}$ & $-s_{2}$ & $+s_{3}$ & $1$ & $1\pm s_{3}\geq\left|s_{1}\pm s_{2}\right|$\tabularnewline
\hline 
$-s_{1}$ & $-s_{2}$ & $-s_{3}$ & $-1$ & $1\mp s_{3}\geq\left|s_{1}\pm s_{2}\right|$\tabularnewline
\hline 
\end{tabular}

\caption{List of all sign combinations for the triple of eta parameters. Note
that whenever the sign of $\det T$ is positive, the two ``$\pm$''
signs present in the FA condition (\ref{FAsingular1})
are the same, while they are reverted
whenever $\det T$ is negative. \label{Tabela1}}

\end{table}

In order to move towards the goal assumed, we express the FA conditions
in terms of the singular values equal to the absolute values of the
parameters $\eta_{l}$. To this end, one needs to examine (see Table
\ref{Tabela1}) all eight combinations of sign choices for the eta
parameters. Put together, this analysis leads to the equivalent form
of Eq. (\ref{FA}) 
\begin{equation}
\begin{cases}
1\pm s_{3}\geq\left|s_{1}\pm s_{2}\right| & \textrm{for }\det T\geq0\\
1\pm s_{3}\geq\left|s_{1}\mp s_{2}\right| & \textrm{for }\det T\leq0
\end{cases}.
\label{FAsingular1}
\end{equation}
To obtain yet another, more friendly form of these conditions, we
examine different orderings between the singular values. For example,
if $s_{1}\geq s_{2}\geq s_{3}$ then the condition with positive determinant
reduces to $1-\left(s_{1}+s_{2}+s_{3}\right)+2s_{3}\geq0$ and $1-\left(s_{1}+s_{2}+s_{3}\right)+2s_{2}\geq0,$
for ``$+$'' and ``$-$'' case respectively. We can see that the
first variant is more restrictive. In this way (going through all
the orderings) one can replace (\ref{FAsingular1}) by
\begin{subequations}
\begin{equation}
1-s_{1}-s_{2}-s_{3}+2\min\left\{ s_{1};s_{2};s_{3}\right\} \geq0,
\end{equation}
for $\det T\geq0$ and
\begin{equation}
1-s_{1}-s_{2}-s_{3}\geq0,
\end{equation}
for $\det T\leq0$.
\label{FAsingular2}
\end{subequations}
The two cases trivially coincide for $\det T=0$ since in this case
$\min\left\{ s_{1};s_{2};s_{3}\right\} =0$. We can also see that
if $\det T>0$ it is possible to assign $s_{1}=s_{2}=s_{3}=1$ (for
instance by $\eta_{1}=\eta_{2}=\eta_{3}=1$), while if $\det T<0$
there is no such possibility. Due to arithmetic-geometric mean inequality one has
$s_{1}s_{2}s_{3}\leq\left(s_{1}+s_{2}+s_{3}\right)^{3}/27$.  
Therefore, assuming that the determinant is negative, 
from (\ref{FAsingular2}) we find that
$s_{1}s_{2}s_{3}\leq 1/27$, and consequently $0\geq\det T\geq-1/27$.
This result was known before (see Corollary 10 from \cite{WolfCirac})
and it is interesting to add that (Example 4 from \cite{WolfCirac})
for every $d$, there exists a channel with $\det T=-\left(d+1\right)^{1-d^{2}}$
equal to $-1/27$ for $d=2$.

So far, we have managed to rewrite the FA conditions in terms of the
singular values of the matrix $T$, supported by $\det T$ which being
function of the eigenvalues is already in desired form. To go further,
we need to relax (\ref{FAsingular2}) with the help of the majorization relations. From the weak majorization (\ref{maj1}) we find
\begin{equation}
-s_{1}-s_{2}-s_{3}\leq-\left(\left|\lambda_{1}\right|+\left|\lambda_{2}\right|+\left|\lambda_{3}\right|\right),
\end{equation}
while due to log majorization (\ref{maj2},\ref{maj3}) we get
\begin{equation}
s_{N}^{\downarrow}\prod_{i=1}^{N-1}s_{i}^{\downarrow}=\left|\lambda_{N}\right|^{\downarrow}\prod_{i=1}^{N-1}\left|\lambda_{i}\right|^{\downarrow}\leq\left|\lambda_{N}\right|^{\downarrow}\prod_{i=1}^{N-1}s_{i}^{\downarrow},
\end{equation}
so that $s_{N}^{\downarrow}\leq\left|\lambda_{N}\right|^{\downarrow}$,
and consequently
\begin{equation}
\min\left\{ s_{1};s_{2};s_{3}\right\} \leq\min\left\{ \left|\lambda_{1}\right|,\left|\lambda_{2}\right|,\left|\lambda_{3}\right|\right\} .
\end{equation}
Making use of the above inequalities we finally obtain:\begin{theorem}
If $\Phi$ is completely positive and unital, then\begin{subequations}\label{CPinv}
\begin{equation}
1-\left(\left|\lambda_{1}\right|+\left|\lambda_{2}\right|+\left|\lambda_{3}\right|\right)+2\min\left\{ \left|\lambda_{1}\right|,\left|\lambda_{2}\right|,\left|\lambda_{3}\right|\right\} \geq0,
\end{equation}
for $\lambda_{1}\lambda_{2}\lambda_{3}\geq0$ and
\begin{equation}
1-\left(\left|\lambda_{1}\right|+\left|\lambda_{2}\right|+\left|\lambda_{3}\right|\right)\geq0,
\end{equation}
 for $\lambda_{1}\lambda_{2}\lambda_{3} < 0$.
\end{subequations}
\end{theorem}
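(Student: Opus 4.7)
The plan is to follow the chain of reductions already set up in the excerpt, showing that each inequality in the Fujiwara--Algoet (FA) criterion relaxes, via the majorization relations, into a bound on the moduli of the eigenvalues. Since $\Phi$ is unital the block form reduces to $T$, so $\lambda_m(\Phi)=\lambda_m(T)$ for $m=1,2,3$, and in particular $\det T = \lambda_1\lambda_2\lambda_3$. Hence the dichotomy ``$\det T\geq 0$ vs $\det T\leq 0$'' in equation (\ref{FAsingular2}) translates verbatim into the dichotomy on the sign of $\lambda_1\lambda_2\lambda_3$ appearing in the statement.

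Next, starting from (\ref{FAsingular2}), I would bound each summand separately. For the linear term, the weak majorization relation (\ref{maj1}) with $k=N=3$ yields
\begin{equation}
s_1+s_2+s_3 \;\geq\; |\lambda_1|+|\lambda_2|+|\lambda_3|,
\end{equation}
so $-s_1-s_2-s_3\leq -(|\lambda_1|+|\lambda_2|+|\lambda_3|)$. For the correction term in the $\det T\geq0$ case, the log majorization (\ref{maj2}) combined with the equality (\ref{maj3}) forces $s_N^{\downarrow}\leq|\lambda_N|^{\downarrow}$ (the smallest singular value is bounded above by the smallest modulus of eigenvalue), hence $\min\{s_1,s_2,s_3\}\leq \min\{|\lambda_1|,|\lambda_2|,|\lambda_3|\}$. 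Because the coefficient $+2$ in front of the minimum is positive, replacing $\min\{s_i\}$ by the larger quantity $\min\{|\lambda_i|\}$ weakens the inequality, which is exactly what we need in order to deduce a necessary condition on the eigenvalues from a condition on the singular values.

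Putting the two bounds together turns inequality (\ref{FAsingular2}a) into (\ref{CPinv}a), and (\ref{FAsingular2}b) into (\ref{CPinv}b), completing the proof. I do not foresee a substantive obstacle: the case analysis on the signs of $\eta_i$ (the one truly tedious step) has already been carried out in Table \ref{Tabela1} and absorbed into (\ref{FAsingular2}), and the direction of each majorization inequality cooperates with the sign in (\ref{FAsingular2}) (negative coefficient in front of $s_i$, positive in front of $\min s_i$). The only small subtlety worth flagging explicitly is that weakening an inequality is legitimate here only because we are deriving a \emph{necessary} condition on the spectrum, not a characterisation; any loss of information along the way is expected and unavoidable, and in fact reflects the intrinsic gauge-invariance limitation emphasised throughout the paper.
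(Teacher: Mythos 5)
Your proposal is correct and follows essentially the same route as the paper: reduce the FA conditions to the singular-value form via the sign analysis of Table \ref{Tabela1}, then relax using weak majorization for the sum and the log-majorization consequence $s_N^{\downarrow}\leq|\lambda_N|^{\downarrow}$ for the minimum term. The directions of all inequalities are handled correctly, so nothing further is needed.
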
 The theorem, as giving necessary criteria of complete
positivity, asserts that the range of eigenvalues of $T$ is contained
in the set characterized by the above inequalities. However if $T$
is a normal matrix, so that its singular values are the same as moduli
of the eigenvalues, the criteria also become sufficient.

In Section~\ref{Zbyszek} we show that if a collection of numbers
$\{\lambda_1,\lambda_2,\lambda_3\}$ satisfies the conditions~\eqref{CPinv}, and
moreover all $\lambda_i$ are real, or $\lambda_1\in \R$ and 
$\lambda_2=\overline{\lambda}_3$, then there exists a unital quantum channel with
eigenvalues $\{1,\lambda_1,\lambda_2,\lambda_3\}$.

\subsection{Non-unital channels} \label{NonU}

After we derived the gauge invariant necessary criteria for complete
positivity of unital quantum channels, an obvious way of continuation
shall concern the general case of an arbitrary translation vector 
$\boldsymbol{k}\neq0$. For $d=2$,
the relevant generalization of the FA conditions was known long ago
\cite{Ruskai}. The requirement of gauge invariance, however, cannot
be straightforwardly imposed on this result, because the gauge brings
problems of conceptual nature. It is clear, that the eigenvalues of
$\Phi$ are independent of $\boldsymbol{k}$. In other words, the
whole, \emph{a priori} information about the vector $\boldsymbol{k}$
is lost due to the gauge symmetry. Therefore, one is not able to derive
conditions for complete positivity different than (\ref{CPinv}).
Instead, one can hope that:
\begin{enumerate}
\item the necessary criteria Eq. (\ref{CPinv}) remain valid for non-unital
channels,
\item one can \emph{a posteriori} gain some knowledge about $\boldsymbol{k}$,
by \emph{imposing} complete positivity.
\end{enumerate}
It turns out, that both expectations are met. 

The first one can be best seen from the alternative shape of criteria
from \cite{Ruskai}, derived recently in \cite{Znidar}. In their
Theorem 4.1 it is explicitly shown that for any translation $\boldsymbol{k}$,
the inequalities (\ref{FA}) must be satisfied, and there is an additional
(single) inequality involving $\boldsymbol{k}$. In other words, since
the conditions (\ref{FA}) hold in general, the same applies to (\ref{CPinv}). 

The extra condition [given in \cite{Znidar} also as Eqs. (37) and (38)] involving $\boldsymbol{k}$ can then serve to restrict the range of
this vector, which in the gauge invariant picture is unknown. According
to Eq. (38) from \cite{Znidar}
\begin{equation}
\left\Vert \boldsymbol{k}\right\Vert ^{2}\leq1-\sum_{i=1}^{3}\eta_{i}^{2}+2\eta_{1}\eta_{2}\eta_{3}\equiv1-\sum_{i=1}^{k}s_{i}^{2}+2\det T,
\end{equation}
with $\left\Vert \cdot\right\Vert $ being the standard Euclidean
norm. Due to the majorization property (\ref{maj1}) we immediately
obtain the gauge invariant bound: \begin{corollary} Whenever $\Phi$
is completely positive, then 
\begin{equation}
\left\Vert \boldsymbol{k}\right\Vert ^{2}\leq1-\left|\lambda_{1}\right|^{2}-\left|\lambda_{2}\right|^{2}-\left|\lambda_{3}\right|^{2}+2\lambda_{1}\lambda_{2}\lambda_{3}.
\end{equation}
\end{corollary}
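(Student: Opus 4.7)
The plan is to take Eq.~(38) from \cite{Znidar}, which the authors have already recalled in the text, as the starting point and then re-express its right-hand side using only gauge invariant data. That bound reads
\begin{equation}
\left\Vert \boldsymbol{k}\right\Vert^{2}\leq 1-\sum_{i=1}^{3}s_{i}^{2}+2\det T,
\end{equation}
so I need to dominate it by a quantity depending solely on the eigenvalues $\lambda_{1},\lambda_{2},\lambda_{3}$ of $T$. The determinant term is already in the desired form, since $\det T=\lambda_{1}\lambda_{2}\lambda_{3}$ is manifestly gauge invariant (and real, as the $\lambda_{i}$ are either all real or one real plus a complex conjugate pair). What remains is to control $\sum_{i}s_{i}^{2}$ from below by $\sum_{i}|\lambda_{i}|^{2}$.

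First I would invoke the weak majorization relation (\ref{maj1}), which states that the decreasing sequence of moduli of eigenvalues is weakly majorized by the decreasing sequence of singular values. A standard consequence is that for any non-decreasing convex function $\phi:[0,\infty)\to\R$, one has $\sum_{i}\phi(|\lambda_{i}|)\leq\sum_{i}\phi(s_{i})$. Applying this to $\phi(x)=x^{2}$ (which is indeed convex and non-decreasing on $[0,\infty)$) yields the desired inequality
\begin{equation}
\sum_{i=1}^{3}|\lambda_{i}|^{2}\leq\sum_{i=1}^{3}s_{i}^{2},
\end{equation}
equivalently recognized as Schur's classical bound $\sum_{i}|\lambda_{i}(T)|^{2}\leq\|T\|_{F}^{2}$.

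Combining the two ingredients, the upper bound from \cite{Znidar} is weakened to
\begin{equation}
\left\Vert \boldsymbol{k}\right\Vert^{2}\leq 1-\sum_{i=1}^{3}s_{i}^{2}+2\det T\leq 1-\sum_{i=1}^{3}|\lambda_{i}|^{2}+2\lambda_{1}\lambda_{2}\lambda_{3},
\end{equation}
which is precisely the claim of the Corollary. The argument is essentially mechanical once one has identified the right convex-function extension of weak majorization; the only point needing a brief justification is this extension from the linear case stated in (\ref{maj1}) to the quadratic case, which is the main (mild) obstacle in the write-up.
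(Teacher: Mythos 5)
Your proposal is correct and follows essentially the same route as the paper: start from Eq.~(38) of \cite{Znidar}, identify $\det T=\lambda_{1}\lambda_{2}\lambda_{3}$, and use the weak majorization (\ref{maj1}) together with the Schur-convexity of the square (equivalently, Schur's inequality $\sum_{i}|\lambda_{i}|^{2}\leq\sum_{i}s_{i}^{2}$) to replace the singular values by eigenvalue moduli. The only difference is that you spell out the convex-function extension of weak majorization, which the paper leaves implicit with the word ``immediately.''
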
 The above bound, which restricts the norm of $\boldsymbol{k}$
for completely positive channels seems relatively strong. For instance,
the special case $\lambda_{1}=\lambda_{2}=\lambda_{3}=1$,  implies $\boldsymbol{k}=0$, so
the channel $\Phi$ is forced to be unital.

The remaining condition Eq. (37) from \cite{Znidar}, in its explicit
form $Z\left(\eta\right)=$
\begin{equation}
\left\Vert \boldsymbol{k}\right\Vert ^{4}-2\left\Vert \boldsymbol{k}\right\Vert ^{2}-2\sum_{i=1}^{3}\!\eta_{i}^{2}\!\left(\!2k_{i}^{2}\!-\!\!\left\Vert \boldsymbol{k}\right\Vert ^{2}\right)+q\left(\eta\right)\geq0,\label{Znidaric2}
\end{equation}
with
\begin{align}
q\left(\eta\right)=&\left(1+\eta_{1}+\eta_{2}+\eta_{3}\right)\left(1+\eta_{1}-\eta_{2}-\eta_{3}\right)\times\nonumber \\
&\left(1-\eta_{1}+\eta_{2}-\eta_{3}\right)\left(1-\eta_{1}-\eta_{2}+\eta_{3}\right),
\end{align}
does not allow for a straightforward characterization in terms of
the eigenvalues of $\Phi$. One can however use the same strategy
as before to pass from the eta parameters to the singular values.
The condition (\ref{Znidaric2}) in terms of $s_{k}$ reads:
\begin{equation}
\begin{cases}
Z\left(s\right)\geq0 & \textrm{for }\det T\geq0\\
Z\left(s\right)-16\det T\geq0 & \textrm{for }\det T\leq0
\end{cases},
\end{equation}
with $Z(\cdot)$ defined in Eq. \ref{Znidaric2}. Given the three eigenvalues of $\Phi$ one can (numerically) optimize
the condition with respect to the singular values, taking the majorization
relations (\ref{maj}) as meaningful constraints. 

\section{Gauge invariant bounds on QCVV metrics}\label{QCVVsec}

In the previous section we were concerned with a formal characterization
of complete positivity (for $d=2$) by means of eigenvalues of the
superoperator. Due to the relation (\ref{eigenvalues}), the obtained
results allow for a direct use of the matrix $M$ which is the GST-reconstructed
(from measurement data) superoperator subject to the gauge (\ref{gauge}). 

One of the main practical goals of GST is verification, whether an
experimentally implemented quantum channel is a faithful representation
of a presumed specification (a target gate). To this end one can evaluate
various, relevant metrics such as the diamond distance or average
gate fidelity. A natural question to be asked is how much about such
metrics can one say using the eigenvalues of the superoperator (so
that one can only use the information from $M$)? Our aim now is to
sketch an answer to that question.

In what follows, we will be concerned with a case of an arbitrary
dimension $d$ and the following three metrics:
\begin{itemize}
\item Average gate fidelity \cite{NielsenChuang},
\begin{equation}
\mathcal{F}_{\textrm{avg}}\left(\mathcal{E}\right)=\int d\psi\left\langle \psi\right|\mathcal{E}\left(\left|\psi\right\rangle \left\langle \psi\right|\right)\left|\psi\right\rangle ,
\end{equation}
also related to the average error rate $r\left(\mathcal{E}\right)=1-\mathcal{F}_{\textrm{avg}}\left(\mathcal{E}\right)$,
\item Unitarity \cite{Unitarity} $u\left(\mathcal{E}\right)=$
\begin{equation}
\frac{d}{d-1}\int d\psi\textrm{Tr}\left[\mathcal{E}'\left(\left|\psi\right\rangle \left\langle \psi\right|\right)^{\dagger}\mathcal{E}'\left(\left|\psi\right\rangle \left\langle \psi\right|\right)\right],
\end{equation}
with $\mathcal{E}'\left(\rho\right)=\mathcal{E}\left(\rho\right)-\left[\textrm{Tr}\mathcal{E}\left(\rho\right)\right]\1/\sqrt{d}$,
\item Diamond norm \cite{diamond}, equivalent to completely bounded trace norm, 
\begin{equation}
\epsilon_{\diamond}\left(\mathcal{E}\right)=\frac{1}{2}\sup_{\sigma}\left\Vert \left(\1_{d}\otimes\mathcal{E}\right)\sigma\right\Vert _{1},
\end{equation}
with $\left\Vert \cdot\right\Vert _{1}$ being the trace norm.
In general the supremum is performed over the entire set of the extended states
of an arbitrary dimension, but in analogy to the property of complete 
positivity it is sufficient to take the size of the ancillary
system to be equal to the size $d$ of the principal system \cite{diamond},
so in our case the
extended state $\sigma$ has dimension $d^2$.

\end{itemize}
The average gate fidelity is a standard measure suitable for error
quantification in quantum gates, since it can be efficiently estimated
experimentally with the help of randomized benchmarking. This big
practical advantage is shared with unitarity, the measure designed
for coherent noise. The diamond distance between two channels 
$\epsilon_{\diamond}\left(\mathcal{E}_{1}-\mathcal{E}_{2}\right)$
is of importance for thresholds of fault tolerance in quantum computing.
However, no efficient protocol to measure this quantity
is known, computation of this distance is rather involved and requires semi-definite
programming~\cite{Watrous2013}. Analytical results were obtained 
in the asymptotic case only~\cite{Nechita2016}, for which  the diamond distance
between two random channels was derived.

We shall start with the following, positively surprising observation.
\begin{prop} The average gate fidelity $\mathcal{F}_{\textrm{avg}}\left(\mathcal{E}\right)$
is gauge invariant. \end{prop}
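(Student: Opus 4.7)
The plan is to exploit the fact that the average gate fidelity is an affine function of $\textrm{Tr}\,\Phi$, after which gauge invariance becomes immediate because the trace is a class function on the gauge group $G$. Indeed, for any invertible $X$ the similarity $M = X^{-1}\Phi X$ satisfies $\textrm{Tr}\,M = \textrm{Tr}\,\Phi$, or equivalently $\textrm{Tr}\,\Phi = \sum_{m=0}^{d^{2}-1}\lambda_{m}(\Phi)$ is a symmetric function of the spectrum and hence gauge invariant by Eq.~(\ref{eigenvalues}). So the whole proposition reduces to producing the identity that makes this reduction possible.

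First I would invoke the Horodecki--Nielsen identity
\begin{equation}
\mathcal{F}_{\textrm{avg}}(\mathcal{E}) = \frac{d\,F_e(\mathcal{E}) + 1}{d+1},
\end{equation}
where the entanglement fidelity admits the Kraus-operator expression $F_e(\mathcal{E}) = d^{-2}\sum_n |\textrm{Tr}\,K_n|^2$. This identity is obtained by a standard one-shot twirling / $1$-design calculation: the Haar average over pure states $|\psi\rangle\langle\psi|$ is rewritten using the projector onto the symmetric subspace, which decomposes into the identity and the swap via Schur--Weyl; both partial traces then yield the stated formula. I would quote rather than re-derive this step, since it is well known.

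Second I would translate the Kraus-side expression into the natural representation $\Phi$ used throughout the paper. Using $\overline{\textrm{Tr}\,K_n} = \textrm{Tr}\,\bar K_n$ together with $\textrm{Tr}(A\otimes B)=\textrm{Tr}\,A\cdot\textrm{Tr}\,B$,
\begin{equation}
\sum_n |\textrm{Tr}\,K_n|^2 \;=\; \sum_n \textrm{Tr}(K_n)\,\textrm{Tr}(\bar K_n) \;=\; \textrm{Tr}\!\left(\sum_n K_n\otimes \bar K_n\right) \;=\; \textrm{Tr}\,\Phi,
\end{equation}
which yields the compact identity
\begin{equation}
\mathcal{F}_{\textrm{avg}}(\mathcal{E}) = \frac{\textrm{Tr}\,\Phi + d}{d(d+1)} = \frac{1}{d(d+1)}\!\left(d + \sum_{m=0}^{d^{2}-1}\lambda_m(\Phi)\right).
\end{equation}
The right-hand side depends on $\Phi$ only through its spectrum, so gauge invariance follows. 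No step here constitutes a genuine obstacle; the only non-trivial ingredient is the Horodecki--Nielsen identity, and the remainder of the argument is a two-line chain of elementary identities together with the invariance of the trace under similarity transformations.
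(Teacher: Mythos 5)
Your proof is correct and follows essentially the same route as the paper: both reduce the claim to the identity $\mathcal{F}_{\textrm{avg}}(\mathcal{E})=(\textrm{Tr}\,\Phi+d)/(d(d+1))$ and then conclude by invariance of the trace under similarity. The only difference is that you derive this formula from the entanglement-fidelity expression, whereas the paper simply cites it.
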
\begin{proof} We resort to the well
known formula for average gate fidelity 
\cite{Nielsen,BOSBJ02,Horodeccy}
\begin{equation}
\mathcal{F}_{\textrm{avg}}\left(\mathcal{E}\right)=\frac{\textrm{Tr}\Phi+d}{d\left(d+1\right)}.
\end{equation}
According to (\ref{eigenvalues}), $\textrm{Tr}\Phi=\textrm{Tr}M$.
\end{proof} Note that, even though expressions for higher order moments
(like the variance) of gate fidelity \cite{variance1,variance2} or
the minimal fidelity \cite{minfid} have been studied, it is not possible
to provide their gauge invariant forms. This observation gives the
average gate fidelity a sort of a unique meaning.

Having the gauge invariance of the first metric established, we can
now move to the second and the third one, which definitely are not
invariant. Note however, that these quantities can be bounded in terms
of the average error rate, $r=1-\mathcal{F}_{\textrm{avg}}$.
A generally valid bound for the unitarity,
\begin{equation}
u\left(\mathcal{E}\right)\geq\left[1-\frac{d}{d-1}r\left(\mathcal{E}\right)\right]^{2},\label{unitaritybound1}
\end{equation}
was obtained in Proposition 8 in \cite{Unitarity}, 
while the following bounds for the diamond norm
\begin{equation}
\frac{d+1}{d}r\left(\mathcal{E}\right)\leq\epsilon_{\diamond}\left(\mathcal{E}-\1_{d}\right)\leq\sqrt{d\left(d+1\right)r\left(\mathcal{E}\right)},
\end{equation}
were established as Proposition 9 in \cite{RBwithConfidence}.
The bound (\ref{unitaritybound1})
is saturated if and only if the matrix $T$ in (\ref{Block}) 
is a diagonal scalar matrix.

Since $r\left(\mathcal{E}\right)$ is gauge invariant the above inequalities
provide ready-to-go estimates for both the diamond distance and the
unitarity. More involved lower and upper bounds for the diamond distance
given as functions of average error rate and unitarity have been recently
derived in \cite{Kung,Wallman}. Note that these upper bounds, after
one sets $u\left(\mathcal{E}\right)=1$, are again function of the
gauge invariant average error rate. Here, we shall only quote the
sharpest among the lower bounds \cite{Wallman}
\begin{equation}
\sqrt{\frac{d^{2}-1}{2d^{2}}}\sqrt{u\left(\mathcal{E}\right)-1+\frac{2d}{d-1}r\left(\mathcal{E}\right)}
\ \leq\ 
\epsilon_{\diamond}\left(\mathcal{E}-\1_{d}\right),
\label{diamondLower}
\end{equation}
Quite obviously
the unitarity, which is not gauge invariant, can be lower-bounded by (\ref{unitaritybound1}).
To make a range of available gauge-invariant options a bit wider,
we shall prove the following \begin{prop}\label{propGross} Unitarity
is lower-bounded
\begin{equation}
u\left(\mathcal{E}\right)
\  \geq\ \frac{\sum_{k}\left|\lambda_{k}\left(\Phi\right)\right|^{2}-d}{d\left(d-1\right)},
\label{secondbound}
\end{equation}
with $\lambda_{k}\left(\Phi\right)$, for $k=0,\ldots,d^{2}-1$, being
the eigenvalues of $\Phi$. The bound is tight whenever the channel
is unitary. \end{prop}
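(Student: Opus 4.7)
The plan is to express $u(\mathcal{E})$ explicitly in terms of the blocks $(\boldsymbol{k},T)$ from \eqref{Block} via a direct Haar integration over pure states, and then to relax that formula twice: first by discarding a manifestly non-negative term, and second by applying log-majorization \eqref{maj2}--\eqref{maj3} to $T$ in order to convert its singular values into moduli of its eigenvalues. A short algebraic comparison then brings the resulting inequality into the form stated in \eqref{secondbound}.

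Concretely, I would represent any pure state in the operator basis $\{B_{0},\ldots,B_{d^{2}-1}\}$ as the supervector with components $(1/\sqrt{d},\vec{n})$, where $\vec{n}$ is the generalized Bloch vector, and observe that applying \eqref{Block} followed by the projection onto the traceless subspace that defines $\mathcal{E}'$ yields $\mathcal{E}'(|\psi\rangle\langle\psi|)$ represented by $\boldsymbol{k}/\sqrt{d}+T\vec{n}$. The standard Haar identities $\int d\psi\,\vec{n}=0$ and $\int d\psi\,\vec{n}\vec{n}^{\mathrm{T}}=\1/(d(d+1))$, which follow from $\int d\psi\,|\psi\rangle\langle\psi|^{\otimes 2}=(\1+F)/(d(d+1))$ with $F$ the swap, then give after an elementary computation
\begin{equation}
u(\mathcal{E})=\frac{\|\boldsymbol{k}\|^{2}}{d-1}+\frac{\mathrm{Tr}(T^{\mathrm{T}}T)}{d^{2}-1}.
\end{equation}

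At this point I would drop $\|\boldsymbol{k}\|^{2}\geq 0$ and invoke Schur's inequality $\mathrm{Tr}(T^{\mathrm{T}}T)=\sum_{j}s_{j}^{2}\geq\sum_{j}|\lambda_{j}(T)|^{2}$, which is a consequence of the log-majorization \eqref{maj2}--\eqref{maj3} combined with convexity of $x\mapsto x^{2}$. Because the block-triangular structure of \eqref{Block} forces the spectrum of $\Phi$ to be $\{1\}\cup\mathrm{spec}(T)$, setting $S:=\sum_{k}|\lambda_{k}(\Phi)|^{2}$ gives the intermediate estimate $u(\mathcal{E})\geq(S-1)/(d^{2}-1)$. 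To reshape this into the right-hand side of \eqref{secondbound}, I would verify the numerical inequality $(S-1)/(d^{2}-1)\geq(S-d)/(d(d-1))$, which after clearing denominators reduces to $S\leq d^{2}$; this holds automatically since $|\lambda_{k}(\Phi)|\leq 1$ for every $k$.

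Tightness then follows at a glance: for unitary $\mathcal{E}$ one has $\boldsymbol{k}=0$ and $T$ orthogonal (in particular normal), so $\mathrm{Tr}(T^{\mathrm{T}}T)=d^{2}-1=\sum_{j}|\lambda_{j}(T)|^{2}$ and $S=d^{2}$, which makes both the exact expression for $u(\mathcal{E})$ and the claimed lower bound equal to $1$, matching $u(\mathcal{E})=1$. The step requiring most care is the Haar integration leading to the explicit formula for $u$; the subsequent relaxations are individually routine, but they must be chained so that they are all saturated simultaneously in the unitary case, which is what validates the stated tightness claim.
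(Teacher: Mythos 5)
Your proof is correct, but it reaches \eqref{secondbound} by a route that differs from the paper's in two respects. The paper does not recompute the Haar integral: it quotes Proposition 9 of \cite{Unitarity}, $u(\mathcal{E})=\bigl(\mathrm{Tr}[\Phi^{\dagger}\Phi]-1-\left\Vert \boldsymbol{k}\right\Vert ^{2}\bigr)/(d^{2}-1)$, eliminates $\left\Vert \boldsymbol{k}\right\Vert ^{2}$ via $\left\Vert \boldsymbol{k}\right\Vert ^{2}\leq(d-1)\left[1-u(\mathcal{E})\right]$ (Proposition 6 of \cite{Unitarity}), and then solves the resulting self-referential inequality for $u(\mathcal{E})$ --- that is precisely how the denominator $d(d-1)$ arises --- before applying the same Schur/Weyl majorization step you use to pass from singular values to moduli of eigenvalues. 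You instead produce the denominator $d(d-1)$ at the very end, from the elementary observation that $(S-1)/(d^{2}-1)\geq(S-d)/(d(d-1))$ is equivalent to $S\leq d^{2}$, which holds because all $d^{2}$ eigenvalues lie in the unit disk. Your argument is self-contained (no appeal to Propositions 6 and 9 of \cite{Unitarity}) and actually establishes the sharper intermediate bound $u(\mathcal{E})\geq\bigl(\sum_{k}|\lambda_{k}(\Phi)|^{2}-1\bigr)/(d^{2}-1)$, of which \eqref{secondbound} is a weakening; the two coincide (and equal $1$) exactly when the whole spectrum sits on the unit circle, so the tightness claim for unitary channels is unaffected, and your check that every relaxation in the chain is saturated there is the right one.

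One caveat: your exact Haar formula $u(\mathcal{E})=\left\Vert \boldsymbol{k}\right\Vert ^{2}/(d-1)+\mathrm{Tr}(T^{\mathrm{T}}T)/(d^{2}-1)$ corresponds to reading $\mathcal{E}'$ as subtracting the identity component of the \emph{output}; the convention of \cite{Unitarity}, for which their Propositions 6--9 are stated and on which the paper's proof relies, lets $\mathcal{E}'$ act on the traceless part of the \emph{input}, giving $u(\mathcal{E})=\mathrm{Tr}(T^{\mathrm{T}}T)/(d^{2}-1)$ with no $\boldsymbol{k}$ contribution. The discrepancy is harmless for the proposition, since under either convention $u(\mathcal{E})\geq\mathrm{Tr}(T^{\mathrm{T}}T)/(d^{2}-1)$, which is all your subsequent chain needs, but you should not assert your expression as an identity without fixing the convention.
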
\begin{proof} By virtue of Proposition 9 from
\cite{Unitarity} one has
\begin{equation}
u\left(\mathcal{E}\right)=\frac{\textrm{Tr}\left[\Phi{}^{\dagger}\Phi\right]-1-\left\Vert \boldsymbol{k}\right\Vert ^{2}}{d^{2}-1},
\end{equation}
with $\boldsymbol{k}$ being the non-unitality vector defined in (\ref{Block}).
Moreover, Proposition 6 from \cite{Unitarity} establishes the bound
$\left\Vert \boldsymbol{k}\right\Vert ^{2}\leq\left(d-1\right)\left[1-u\left(\mathcal{E}\right)\right]$,
which can be directly applied to the previous equation, leading to
\begin{equation}
u\left(\mathcal{E}\right)
\ \geq \
\frac{\textrm{Tr}\left[\Phi{}^{\dagger}\Phi\right]-1-\left(d-1\right)\left[1-u\left(\mathcal{E}\right)\right]}{d^{2}-1},
\end{equation}
We can solve the last inequality for $u\left(\mathcal{E}\right)$
getting
\begin{equation}
u\left(\mathcal{E}\right)
\ \geq \ 
\frac{\textrm{Tr}\left[\Phi{}^{\dagger}\Phi\right]-d}{d\left(d-1\right)}.
\end{equation}
The term $\textrm{Tr}\left[\Phi{}^{\dagger}\Phi\right]$ is the sum
of squared singular values of $\Phi$. Since the singular values always
majorize (see Eq. \ref{maj1}) the moduli of eigenvalues of a given
matrix (and second power is a Shur-convex function) we obtain the
desired bound (\ref{secondbound}).\end{proof} Both (\ref{unitaritybound1})
and (\ref{secondbound}) can further be used to provide gauge invariant
lower bounds for (\ref{diamondLower}).

\section{Discussion}\label{Zbyszek} 

The motivation of this paper stems from the similarity (gauge) relation
(\ref{gauge}) being of relevance for gate set tomography. If one takes the
limitations brought by the gauge seriously, one is directed towards gauge
invariant ways of inferring conclusions. However, contrarily to the standard
case of field theories in physics, while passing to gauge invariant description
we lose accuracy. Even though, one is allowed to argue that the criteria and
bounds derived in Sec. \ref{CPsec} and Sec. \ref{QCVVsec} respectively could
potentially be improved, we shall stress that some loss of information is
unavoidable. To make it evident, we show the following result valid for qubit
gates.

As explained in the preliminary part, the eigenvalues of $\Phi$ are either real
or come in complex-conjugate pairs. For $d=2$ this implies that either all three
non-trivial eigenvalues are real or (without loss of generality):
$\lambda_{1}\left(\Phi\right)=x\in\mathbb{R}$ and
$\lambda_{2}\left(\Phi\right)=z$, $\lambda_{3}\left(\Phi\right)=\overline{z}$ 
with
$z\in\mathbb{C}$ and $\mathrm{Im}z\neq 0$. In the latter case, the CP conditions (\ref{CPinv}), which as explained in Sec. \ref{NonU} are valid for all qubit channels (not only for unital ones), simplify to the form
\begin{equation}
\begin{cases}
1-\left(x+2\left|z\right|\right)+2\min\left\{ x,\left|z\right|\right\} \geq0 & \textrm{for }x\geq0\\
1-\left(\left|x\right|+2\left|z\right|\right)\geq0 & \textrm{for }x\leq0
\end{cases}.
\end{equation}
It can easily be deduced, that since $\left|x\right|\leq1$ and $\left|z\right|\leq1$
the above, in fact, reduces to a single, basic inequality
\begin{equation}
\left|z\right|\leq\frac{1+x}{2},
\label{simple}
\end{equation}
valid for any one-qubit unital quantum operation 
with two complex eigenvalues of the corresponding superoperator. Possible 
values of the eigenvalue $z$, for a given $x$, are plotted in Fig.~\eqref{fig:possible-z}.
\begin{figure}[ht]
\centering
\subfloat[$x=-0.4$]{\includegraphics[width=0.45\linewidth]{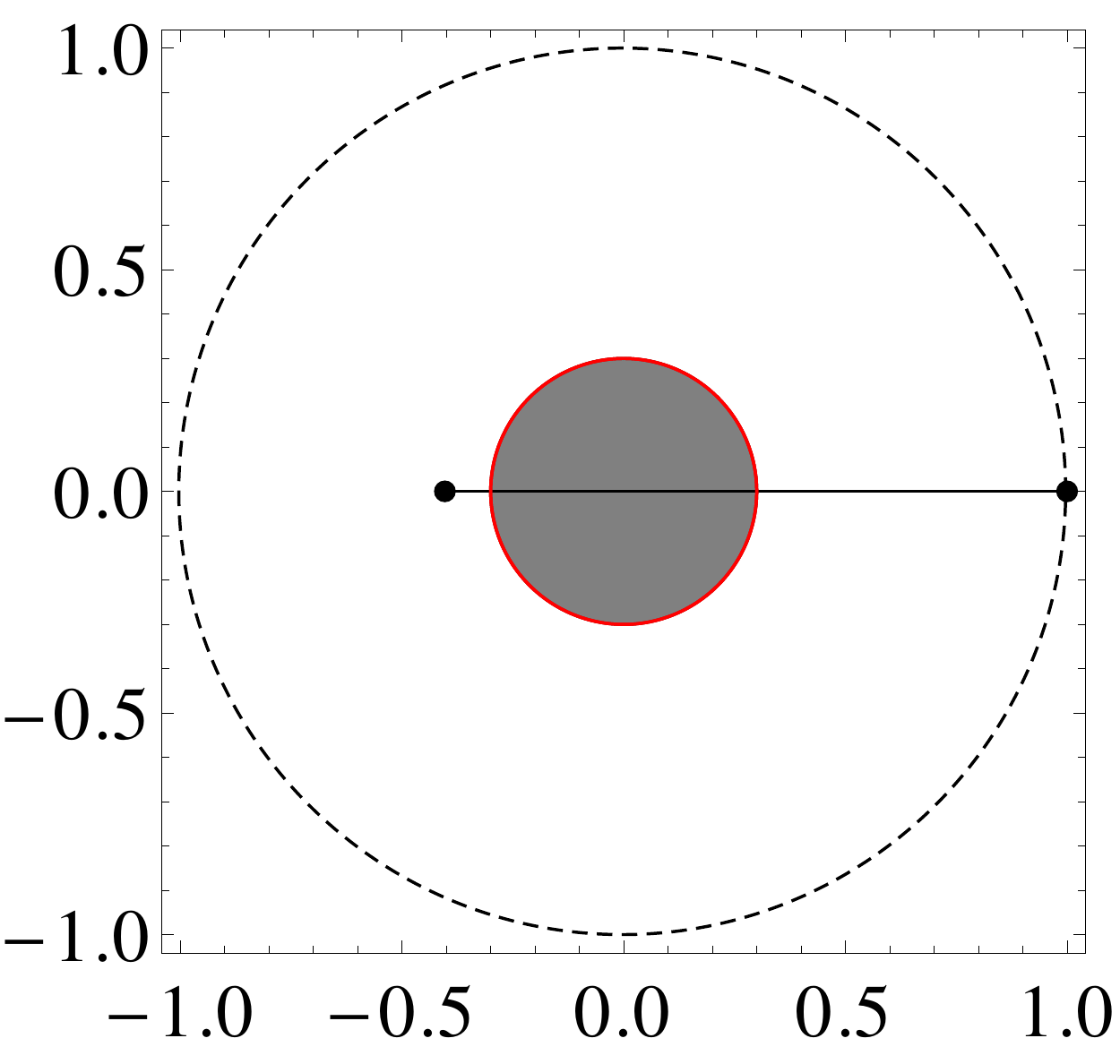}}\quad
\subfloat[$x=0.4$]{\includegraphics[width=0.45\linewidth]{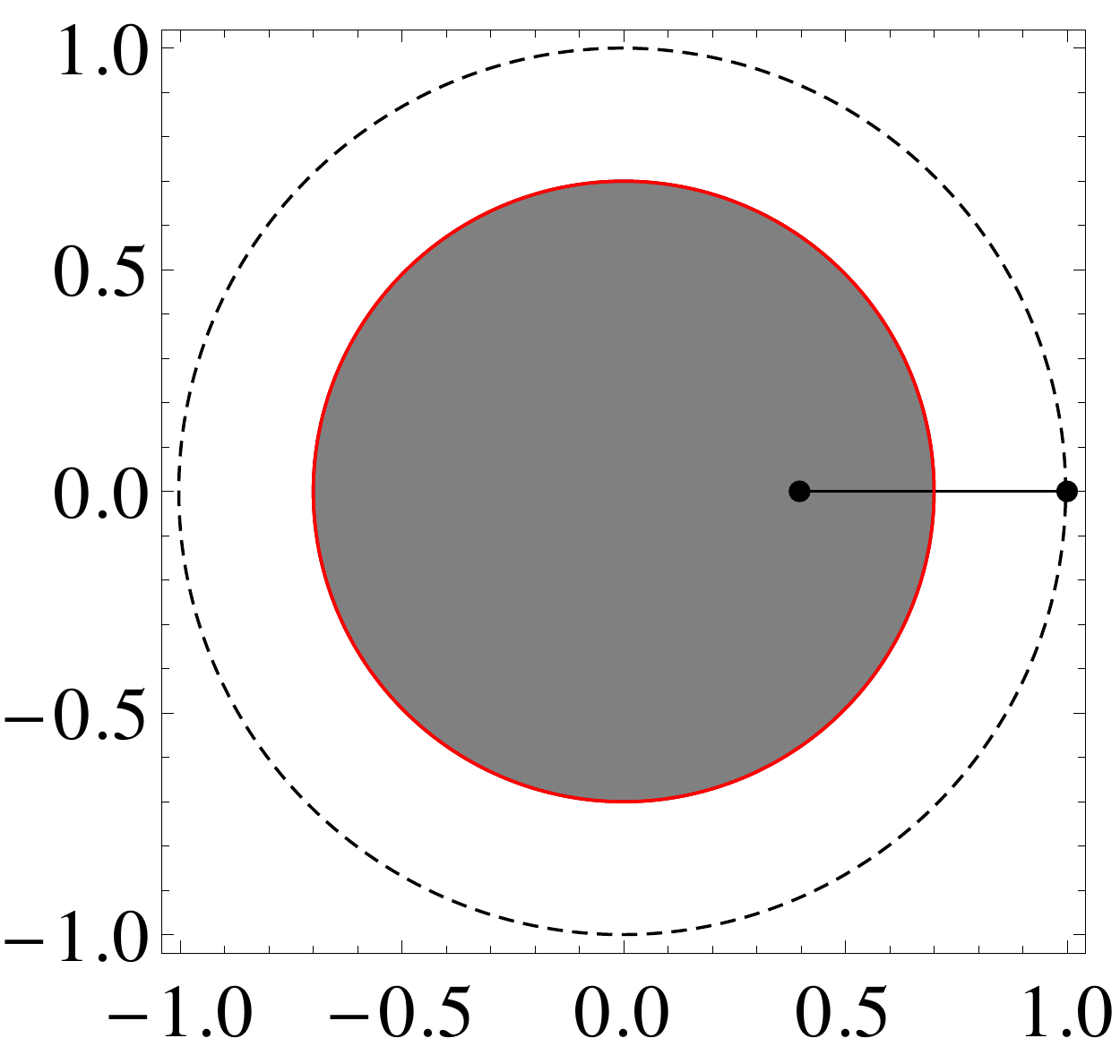}}
\caption{The range of complex eigenvalues for qubit quantum channels. 
In the grey region we plot possible values of $z$ in the case when eigenvalues 
of a qubit channel are in the form $\{1,x,z,\overline{z}\}$, where $x$ is marked at the real axis.
}
\label{fig:possible-z}
\end{figure}

In the case of all three eigenvalues are real the complete positivity 
conditions \eqref{CPinv} are equivalent to the following 
\begin{eqnarray}\label{eqn:real-ev-conditions}
1+\lambda_1+\lambda_2+\lambda_3 &\geq &0,\nonumber\\
1+\lambda_1-\lambda_2-\lambda_3 &\geq &0,\nonumber\\
1-\lambda_1+\lambda_2-\lambda_3 &\geq &0,\nonumber\\
1-\lambda_1-\lambda_2+\lambda_3 &\geq& 0.
\end{eqnarray}
The above states, that if a qubit channel, regardless of whether it is unital or not, has all real eigenvalues, they must lay in 
a tetrahedron defined in~\eqref{eqn:real-ev-conditions}.

Now, let us define two channels, namely: a classical doubly stochastic channel,
described by a superoperator in form of an extended bistochastic matrix,
\begin{equation} S_a =  \left( \begin{smallmatrix} a & 0 & 0 & 1-a \\ 0 & 0 & 0 
& 0 \\ 0 & 0 & 0 & 0 \\ 1-a & 0 & 0 & a  \end{smallmatrix} \right) \ \ \ \text{ 
for } 0\leq a \leq 1, 
\end{equation} 
and a unitary channel with its eigenvectors forming the computational basis 
\begin{equation}
\Psi_\alpha =  \left(
\begin{smallmatrix}
 1 & 0  \\
 0 & e^{i \alpha}  
 \end{smallmatrix} 
 \right) \otimes 
 \left(
 \begin{smallmatrix} 
 1 & 0  \\ 
 0 & e^{- i \alpha}  
 \end{smallmatrix} 
 \right) = 
 \left(
 \begin{smallmatrix} 
 1 & 0 & 0 & 0 \\
 0 & e^{- i \alpha} & 0 & 0 \\
 0 & 0 & e^{i \alpha} & 0 \\
 0 & 0 & 0 & 1
 \end{smallmatrix}
 \right). 
\end{equation}
We are in position to prove the following
\begin{prop}
Let  $\Phi$ be a superoperator of a qubit unital channel characterized by eigenvalues 
$\{1,\lambda_1,\lambda_2,\lambda_3\}$ if:
\begin{itemize}
\item eigenvalues are in the form $\{1,x,z,\overline{z}\}$ with $\mathrm{Im}z\neq 0$ then  $\Phi$ 
is gauge-equivalent to a mixture of a classical map and a unitary 
transformation,
\begin{equation} \label{mixture} 
\Phi_{p,a,\alpha} = p S_a+ (1-p) \Psi_\alpha,
\end{equation}

\item all eigenvalues are real, then $\Phi$ is gauge-equivalent to 
a normal superoperator 
\begin{equation}
\Xi_{\lambda}=
\frac12 \left(
\begin{smallmatrix}
 1+\lambda _1 & 0 & 0 & 1-\lambda _1 \\
 0 & \lambda _2+\lambda _3 & \lambda _3-\lambda _2 & 0 \\
 0 & \lambda _3-\lambda _2 & \lambda _2+\lambda _3 & 0 \\
 1-\lambda _1 & 0 & 0 & 1+\lambda _1
\end{smallmatrix}
\right).
\end{equation}
\end{itemize}

%
\end{prop}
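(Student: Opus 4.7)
The plan is to check that each specific candidate on the right--hand side (i) has the prescribed spectrum and (ii) can be produced by choices of its free parameters within their admissible ranges under the CP hypothesis encoded by~\eqref{CPinv} (equivalently, in the complex case, by~\eqref{simple}). Since in the unital block form the gauge action collapses to a real similarity of the $3\times 3$ block $T$ --- any similarity $M = X^{-1}\Phi X$ between two TP maps can be realized by an $X \in G$, as shown in the Preliminaries --- matching spectra together with matching Jordan type then suffices for gauge equivalence.

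\textbf{Real case.} I would first transport $\Xi_{\lambda}$ from the computational basis to the normalised Pauli basis $\{I/\sqrt{2},X/\sqrt{2},Y/\sqrt{2},Z/\sqrt{2}\}$. A direct computation of $\Xi_{\lambda}(I),\Xi_{\lambda}(X),\Xi_{\lambda}(Y),\Xi_{\lambda}(Z)$ yields the diagonal Pauli superoperator with $T=\mathrm{diag}(\lambda_3,\lambda_2,\lambda_1)$. Trace preservation and unitality are then immediate, and complete positivity reduces by the Fujiwara--Algoet condition~\eqref{FA} with $\eta_i=\lambda_i$ to the four tetrahedral inequalities~\eqref{eqn:real-ev-conditions} that are assumed on the spectrum. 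Hence $\Xi_{\lambda}$ is a bona fide CPTP channel with spectrum $\{1,\lambda_1,\lambda_2,\lambda_3\}$, and it is normal. Any unital $\Phi$ with diagonalisable $T$ and the same spectrum is therefore similar to this Pauli diagonal, with the similarity implementable inside $G$.

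\textbf{Complex case.} Because both $S_a$ and $\Psi_{\alpha}$ preserve the splitting $\mathrm{span}\{|00\rangle,|11\rangle\}\oplus\mathrm{span}\{|01\rangle,|10\rangle\}$, so does the mixture $\Phi_{p,a,\alpha}$. Diagonalising the two $2\times 2$ blocks gives eigenvalues $\{1,\,1-2p(1-a)\}$ and $\{(1-p)e^{\pm i\alpha}\}$. Equating with $\{1,x,z,\overline{z}\}$ yields the explicit inversion
\[
p = 1-|z|,\qquad \alpha = \arg z,\qquad a = 1-\frac{1-x}{2(1-|z|)}.
\]
The interval $p\in[0,1]$ is automatic from $|z|\leq 1$, while $a\in[0,1]$ is equivalent to $|z|\leq (1+x)/2$, which is exactly the simplified CP bound~\eqref{simple}. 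Thus under the hypothesis $\Phi_{p,a,\alpha}$ is a convex combination of CPTP maps with the prescribed spectrum. Since $\mathrm{Im}\,z\neq 0$ forces $z\neq\overline{z}$, for generic $x$ the four eigenvalues are pairwise distinct, so $\Phi_{p,a,\alpha}$ is diagonalisable; a diagonalisable $\Phi$ with the same spectrum then shares the same Jordan type and is therefore gauge-equivalent.

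\textbf{Main obstacle.} The delicate point is the treatment of degenerate spectra in which $\Phi$ may carry a non-trivial Jordan block --- for instance $x=1$ in the complex case, or coincidences among $\lambda_1,\lambda_2,\lambda_3$ in the real case. Both target representatives $\Xi_{\lambda}$ and $\Phi_{p,a,\alpha}$ are diagonalisable by construction, so literal similarity fails when $\Phi$ is not. I would resolve this either by restricting the claim to the generic diagonalisable stratum (where the argument above is clean) or by a continuity argument approximating a degenerate $\Phi$ by diagonalisable CP maps, each of which is gauge-equivalent to an explicit representative of the stated form.
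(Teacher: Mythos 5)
Your argument is essentially the paper's own proof: for the complex case the authors compute the spectrum of $\Phi_{p,a,\alpha}$ as $\{1,\,1-2p(1-a),\,(1-p)e^{\mp i\alpha}\}$, invert to exactly your $p=1-|z|$, $a=\frac{x-2|z|+1}{2-2|z|}$, $\alpha=\arg z$, and verify $0\le p,a\le1$ from $|z|\le1$ and the condition \eqref{simple}; for the real case they simply say the claim follows ``by direct inspection,'' which your Pauli-basis computation giving $T=\diag(\lambda_3,\lambda_2,\lambda_1)$ makes explicit. So the computational core matches.

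The one place where you go beyond the paper is the ``main obstacle,'' and it deserves a sharper resolution than you give. In the complex case the obstacle disappears entirely, with no need to restrict to a generic stratum: $z$ and $\overline z$ are simple eigenvalues since $\mathrm{Im}\,z\neq0$, and the only possible coincidence is $x=1$; but an eigenvalue of modulus one of a CPTP map is necessarily semisimple (the powers $\Phi^n$ are all CPTP, hence uniformly bounded, which forbids a nontrivial Jordan block on the peripheral spectrum), so $\Phi$ is always diagonalizable here and the similarity to $\Phi_{p,a,\alpha}$ — implementable inside $G$ because both maps are unital, i.e.\ of the form $1\oplus T$ — holds unconditionally. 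In the real case, by contrast, the obstacle is genuine and neither of your proposed fixes works as stated: the continuity argument fails because similarity orbits are not closed (a limit of diagonalizable maps gauge-equivalent to $\Xi_\lambda$ need not itself be gauge-equivalent to $\Xi_\lambda$), and restriction to the diagonalizable stratum is in fact necessary. A concrete witness is the unital channel with $T=c\,e_1e_2^{T}$ for $0<c\le1$ (one checks its Choi matrix is $\frac14\1_4-\frac{c}{4}\sigma_1\otimes\sigma_2\ge0$), whose spectrum is $\{1,0,0,0\}$ but which carries a nilpotent Jordan block and hence is not gauge-equivalent to the completely depolarizing map $\Xi_{(0,0,0)}$. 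So your instinct is right that the second bullet implicitly assumes $\Phi$ diagonalizable; the paper's ``direct inspection'' glosses over exactly this point.
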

\begin{proof} 
The superoperator (\ref{mixture}) has the eigenvalues: $\{1, 1-2p (1-a), (1-p) 
e^{-i \alpha },(1-p) e^{i \alpha }\}$. They will be equal to those 
of $\Phi$, provided that: $p =  1 - |z|$, $a= \frac{x-2 |z|+1}{2-2  |z|}$ and 
$\alpha = \arg (z)$. This is enough to assure that $\Phi$ and 
$\Phi_{p,a,\alpha} $ are connected by a similarity relation (\ref{gauge}). One 
shall only check that the mixture is proper, namely that $0\leq p, a \leq 1$. 
The first requirement $0\leq p \leq 1$  is trivial since $|z|\leq1$. The second 
one is also simple, i.e. 
\begin{equation}
\begin{split} \frac{x-2 |z| + 1}{2-2|z| } &\geq 0 \ \Longleftrightarrow |z| 
\leq  \frac{1+x}{2}, \\ \frac{x-2 |z| + 1}{2-2 |z| } &\leq 1 \ 
\Longleftrightarrow x \leq 1.
\end{split} 
\end{equation} These requirements are satisfied by virtue of our assumptions, 
in particular, due to the complete positivity condition \eqref{simple}.
In the case of real eigenvalues the proof follows by a direct inspection.
\end{proof}

The above statement is an emanation of the information loss associated
with the gauge-invariant description. Having in our disposal the eigenvalues
of $\Phi$ and nothing more, we are in principle unable to differentiate
between the actual quantum channel and e.g. the mixture (\ref{mixture}),
even if the original channel is clearly distinct from it. 

To shortly conclude, our aim was to check, how much of relevant information
about a quantum channel can be obtained from a collection of 
the eigenvalues of the associated superoperator. 
 As one might not necessarily expect particularly surprising answers to the above question, we shall emphasize practical relevance (within GST) of the problem studied.  It is
clear, that even though the eigenvalues alone contain a substantial
amount of valuable content, the limitation imposed by the GST gauge
invariance treated seriously is quite severe. It thus seems the GST
protocol would gain a lot, provided that one is able to experimentally
infer anything about the gauge degrees of freedom. Quite obviously,
since the GST gauge is not put by the nature on the same footing as
gauge symmetries of fundamental interaction, deeper studies aiming
at providing a routine to fix the gauge in a possibly cleanest way
(maybe not necessarily as the most optimistic one from the QCVV perspective)
are worth further research. 

On the positive side we shall emphasize that the average gate fidelity
or the average error rate are completely specified by the eigenvalues
of the superoperator, accessible by the gate set tomography procedure.
This fact, shall potentially let these quantifiers be
even more useful than they have been by now.

\medskip

\acknowledgments 

We would like to thank David Gross for fruitful discussions and his
support in proving Proposition \ref{propGross}. \L .R. acknowledges
financial support by grant number 2014/13/D/ST2/01886 of the National
Science Center, Poland. Research in Cologne has been supported by
the Excellence Initiative of the German Federal and State Governments
(Grants ZUK 43 \& 81), the ARO under contract W911NF-14-1- 0098 (Quantum 
Characterization, Verification, and Validation), and the DFG projects GRO 
4334/1,2 (SPP1798 CoSIP).
Z.P. acknowledges the support by the Polish National Science Center under
the Project Number 2016/22/E/ST6/00062.
K.{\.Z}. acknowledges the support by the Polish National Science Center under
the Project Number DEC-2015/18/A/ST2/00274 and by the 
John Templeton Foundation under the Project No. 56033.

\end{document}